\newtheorem{proposition}{Proposition}
\newcommand{\pa}{{\rm pa}}       
\newcommand{\sib}{{\rm sib}}	   
\newcommand{\ve}{{\rm vech}}       
\begin{document}

\title{Empirical Likelihood for Linear Structural Equation Models with Dependent Errors}

\author[1]{Y. Samuel Wang}
\author[1]{Mathias Drton}

\affil[1]{Department of Statistics\\ University of Washington\\ Seattle, WA 98103 USA}

\maketitle

\begin{abstract}
  We consider linear structural equation models that are associated with
  mixed graphs.  The structural equations in these models only involve
  observed variables, but their idiosyncratic error terms are allowed
  to be correlated and non-Gaussian.  We propose empirical likelihood
  (EL) procedures for inference, and suggest several modifications,
  including a profile likelihood, in order to improve tractability and
  performance of the resulting methods.  Through simulations, we show that
  when the error distributions are non-Gaussian, the use of EL and the
  proposed modifications may increase statistical efficiency and
  improve assessment of significance.
  \\ \\
  \textbf{Keywords}: Empirical likelihood, causal inference, graphical model,
     structural equation model
\end{abstract}

\section{Introduction}

Structural equation models (SEMs) are multivariate statistical models
in which each considered variable is a function of other variables and
a stochastic error term.  Often, some of these other variables are
latent \citep{bollen2014structural}.  This paper, however, focuses on
SEMs in which the effects of latent variables are summarized.
Adopting the dominant linear paradigm, we will thus be concerned with
models in which linear functions relate only observed variables, but
error terms may be dependent.  Such models are sometimes referred to
as semi-Markovian \citep{shpitser2006identification}.  Avoiding any
explicit specification of latent confounding, the models play an
important role in exploration of cause-effect structures
\citep{colombo:2012,pearl:2009,richardson:2002,spirtes2000causation,wermuth:2011}.
Much insight about the models can be gained from a natural graphical
representation by mixed graphs/path diagrams that originates in work
of \cite{wright1921correlation}.

Formally, let $Y_1,\dots,Y_n$ be a multivariate sample with each
observation indexed by a set $V$.  So, $Y_i=(Y_{vi})_{v\in V}$ with each $Y_{vi}$ real-valued.  Now consider the system of
structural equations
\begin{equation}
\label{eq:multivarRep}
Y_{vi} = \mu_v + \sum_{u \in V \setminus v} \beta_{vu}Y_{ui} +
\epsilon_{vi}, \quad v\in V, \quad i=1,\dots,n,
\end{equation}
where the $\mu_v$ and $\beta_{vu}$ are unknown parameters and the
$\epsilon_{vi}$ are random errors.  Define vectors
$\epsilon_i=(\epsilon_{vi})_{v\in V}$ and $\mu=(\mu_v)_{v\in V}$, and
a matrix $B=(\beta_{vu})_{v,u\in V}$ with $\beta_{vu}=0$ if $v=u$.  We
assume that the error vectors $\epsilon_i$ are independent and
identically distributed, have zero means, and have covariance matrix
$\mathbb{E}\left(\epsilon_i \epsilon_i^t \right) =
\Omega=(\omega_{vu})$.  However, we do not specify any parametric form
for their distribution.  For each $i$, the equations
in~(\ref{eq:multivarRep}) can be written as
$Y_i = \mu + BY_i + \epsilon_i$.  If $(I - B)$ is non-singular, then
this system is solved uniquely by $Y_i=(I-B)^{-1}(\mu+\epsilon_i)$.
This solution has mean vector $(I-B)^{-1}\mu$ and covariance
matrix
\begin{equation} \label{eq:y_distribution}
\Sigma(B, \Omega) \;:=\; (I - B)^{-1}\Omega (I - B)^{-t}.
\end{equation}
Specific models are now obtained by hypothesizing that a particular
collection of coefficients $\beta_{vu}$ and error covariances $\omega_{vu}$
is zero.

An SEM can be represented conveniently by a path diagram/mixed graph
$\mathcal{G} = (V, E_\rightarrow, E_\leftrightarrow)$.  Here, the
vertex set $V$ yields a correspondence between the nodes of the graph
and the observed variables.  The set $E_\rightarrow$ is a set of
\emph{directed edges} $u \rightarrow v$, which encode that variable
$u$ may have a direct effect on variable $v$.  The set
$E_\leftrightarrow$ comprises \emph{bidirected edges}
$u \leftrightarrow v$ that indicate that the errors $\epsilon_{ui}$
and $\epsilon_{vi}$ may be correlated.  Define the set of
\emph{parents} of node $v$ as
$\pa(v)=\{u\in V: u\to v\in E_{\rightarrow}\}$.  Similarly, define a
set of \emph{siblings} as
$\sib(v)=\{u\in V: u\leftrightarrow v\in E_\leftrightarrow\}$.
Bidirected edges have no orientation, and $v\in\sib(u)$ if and only if
$u\in\sib(v)$.  Now, the graph $G$ induces a model through the
requirement that
\begin{align}
  \label{eq:par-space-B}
  B\in \mathcal{B}(G)&\;:=\; \left\{B\in\mathbb{R}^{V\times V}:\det(I-B) \neq 0,\;
                       \beta_{vu} = 0 \ \mathrm{if} \  u \not \in \pa(v)\right\},\\
    \label{eq:par-space-Omega}
  \Omega\in\mathcal{W}(G)&\;:=\; \left\{\Omega\in\mathbb{R}^{V\times
                           V}:\Omega \ \mathrm{pos.\ def.},\; \omega_{vu} = \omega_{uv}=0 \
                           \mathrm{if}\
                           v \not \in \sib(u)\right\}.
\end{align}
  We emphasize
that our treatment allows the model to have feedback loops, that is, $G$ may have directed cycles.

\subsection{Related Work}


Frequently, the errors in a SEM, and consequently also the
observations $Y_i$, are assumed to be multivariate Gaussian which
yield \emph{maximum likelihood estimates} (MLEs).  The Gaussian likelihood
is often maximized using generic optimization methods; as done in the
popular packages \texttt{sem} \citep{fox2017sem} and \texttt{lavaan}
\citep{lavaan} for \texttt{R} \citep{r2017}.  The coordinate-descent
methods proposed by \cite{drton2009computing} and
\cite{drton2016computation} can be a useful computational alternative
that largely avoids convergence issues.

As a less parametric method, \emph{generalized least squares} (GLS)
 minimizes a discrepancy
 between the sample covariance and the covariance implied by the
 parameters. Although the estimates are slightly more robust to
 misspecification, they are still asymptotically equivalent to the
 Gaussian MLEs \citep{olsson2000performance}.  When multivariate
 Gaussianity is inappropriate, MLEs and GLS generally lose statistical
 efficiency and yield incorrectly calibrated confidence
 intervals. \emph{Weighted least squares} methods (WLS)---also called
 \emph{asymptotically distribution free}---weight the discrepancy between the
 observed and hypothesized covariance structure by explicitly
 estimated fourth moments. Although WLS estimates are consistent and
 produce asymptotically correct confidence intervals even with
 non-Gaussian errors, the estimation of higher order moments may come at
 a loss of statistical efficiency and cause convergence issues, which
 has limited their use \citep{muthen1992comparison}.

 \citet{chaudhuri2007estimation} propose using the \emph{empirical
   likelihood} (EL) of \cite{owen2001empirical} to estimate a
 covariance matrix with structural zeros.  In our setup, this
 corresponds to the special case of a mixed graph with no directed
 edges.  \citet{kolenikov2009empirical} use EL to estimate the
 parameters of a linear SEM.  In contrast to the mixed graph
 formulation, \citet{kolenikov2009empirical} consider the case where
 the latent variable structure is explicitly modeled and all errors
 are independent.  The EL approach is appealing as it gives consistent
 estimates and asymptotically correct confidence intervals even when
 the errors are not multivariate Gaussian. However, EL can present
 numerous practical difficulties when the sample size is small
 relative to the number of parameters or estimating equations used.
 Moreover, standard implementation of EL methods is computationally
 feasible only for systems with a handful of variables.  We believe
 that these issues have prevented application of EL to linear SEMs
 beyond what was done by \citet{kolenikov2009empirical}.



\subsection{Contribution}

In this article, we apply the empirical likelihood framework to SEMs
represented by mixed graphs and propose several modifications to a
naive approach which address the most salient practical
concerns:

\begin{enumerate}[label=(\roman*)]
\item We show that in the mixed graph setting, the
covariance parameters $\Omega$ can be profiled out. This greatly
reduces the computational burden by reducing the number of estimating
equations imposed and parameters directly estimated. It also naturally
encodes the positive definite constraint on $\Omega$ and yields a
positive definite estimate of $\Omega$ for any point $B$ with a well
defined empirical likelihood.
\item When maximizing the empirical
likelihood, we leverage a recent insight and directly incorporate
gradient information in a quasi-Newton procedure instead of the
typical derivative-free approaches to empirical likelihood
optimization.  This again yields substantial computational savings.
\item We use the \emph{adjusted empirical likelihood} (AEL),
  first proposed by \citet{chen2008adjusted}.  This adjustment ensures
  that an empirical likelihood and corresponding gradient is
  well defined for every value in the parameter space.
\item We apply the idea of \emph{extended empirical likelihood}
  (EEL), which furnishes drastically improved coverage of confidence
  intervals at small sample sizes \citep{tsao2014extended}.
\end{enumerate}
Our simulations show that with these proposed modifications, empirical
likelihood becomes an attractive alternative for practitioners
concerned with non-Gaussianity in structural equation modeling.

\section{Background on Empirical Likelihood} \label{sec:empLik}

Let $Y=(Y_1,\dots,Y_n)$ be a sample from an $m$-variate distribution
$P$ belonging to a non-/semiparametric statistical model
$\mathcal{M}$.  Let $\mathcal{P}_n$ be the $n-1$ dimensional
probability simplex.  For $p=(p_1,\dots,p_n)\in\mathcal{P}_n$, define
the log-empirical likelihood $\ell(p; Y) = \sum_{i=1}^n \log(p_i)$.
This is the log-likelihood of the sample under the discrete
distribution with mass $p_i$ at each point $Y_i$.  Suppose we are
interested in a parameter $\theta=\theta(P)$ taking values in
$\Theta \subseteq \mathbb{R}^d$ such that for a map
$G: \mathbb{R}^m \times \mathbb{R}^d \mapsto \mathbb{R}^q$ we have
$\mathbb{E}_{P} G(Y_i, \theta(P)) = 0$ for all $P\in\mathcal{M}$.  The
log-empirical likelihood at a given parameter value $\theta$ is then
\begin{equation}
  \label{eq:EL:general}
\ell(\theta; Y) = \max_{p \in \mathcal{P}_\theta } \; \ell(p; Y) =
\max_{p \in \mathcal{P}_\theta } \;\sum_{i=1}^n \log(p_i),
\end{equation}
where the feasible set
\begin{equation}
  \label{eq:general:Ptheta}
\mathcal{P}_\theta = \left\{p\in\mathcal{P}_n: \sum_{i=1}^n  p_i G(Y_i, \theta) = 0 \right\}
\end{equation}
 reflects that the
expectation of $G(\cdot; \theta)$ vanishes for distributions
compatible with $\theta$.

The empirical likelihood (EL) from~(\ref{eq:EL:general}) provides a
basis for statistical inference.  Maximizing it over $\theta\in\Theta$
yields the \emph{maximum empirical likelihood estimator}
\begin{equation}
\check \theta = \arg\max_{\theta} \ell(\theta; Y)
\end{equation}
that we refer to as MELE.  Ratios of the EL yield
\emph{empirical likelihood ratio} statistics.
\citet{owen1988empirical} derives an EL analogue of Wilk's Theorem,
and the result was expanded to the general estimating equation
framework by \citet{qin1994empirical}. The specific regularity
conditions needed are discussed in Section \ref{sec:asymp}, and the
results imply under very general conditions that the MELE is
consistent and asymptotically normal. In addition, EL ratio statistics
have limiting $\chi^2$ distributions that can be used to calibrate
statistical tests and create confidence intervals or regions.  For a
detailed exposition of these ideas, we refer readers to
\cite{owen2001empirical}.

The nice theoretical properties for EL, however, come at a high
practical cost.  The practical issues become particularly pressing for
applications to linear SEMs, for which the number of parameters and
estimating equations generally grow on the order of $m^2$, where
$m=|V|$ is the number of variables considered.  We describe
three difficulties that complicate the direct use of EL for
SEMs:
\begin{enumerate}[label=(\roman*)]
\item For some values $\theta$, the origin may be outside the convex
  hull of $\{G(Y_i, \theta):i=1,\dots,n\}$, in which case the feasible
  set $\mathcal{P}_\theta$ from~(\ref{eq:general:Ptheta}) is empty and
  the EL at $\theta$ is zero.  This ``convex hull problem''
  occurs more often when the sample size is small relative to the
  number of estimating equations or when the data is skewed. As
  discussed by \citet{grendar2009empty}, it is possible that
  $\mathcal{P}_\theta=\emptyset$ for all parameter vectors $\theta$,
  which is known as the ``empty set problem''.
  In addition, the log-EL is typically not a convex function
  \citep{chaudhuri2016constrained}, and finding an initial
  point that has well-defined EL and is in the basin of attraction of
  the MELE can be difficult.

\item The optimization problem defining the log-EL $\ell(\theta;Y)$
  from~(\ref{eq:EL:general}) is typically solved iteratively through
  its dual.  Although this problem is convex, it can be
  computationally burdensome when the number of estimating equations,
  which corresponds to the number of dual variables, is large.

\item Confidence intervals based on the asymptotic normal variance and
  $\chi^2$ likelihood ratio calibration have been shown to often
  undercover at small sample sizes \citep{tsao2014extended}.
\end{enumerate}

\section{Empirical Likelihood for
  SEMs} \label{sec:elwithSecondMoments}

We now turn to the application of EL to SEMs.  For expository
simplicity, we assume throughout that our observations are centered.
In other words, the intercept parameter vector $\mu$
for~(\ref{eq:multivarRep}) is zero, so that $\mathbb{E}(Y_i)=0$.
However, our ideas extend straightforwardly to the case where we also
make inference about $\mu \neq 0$.

\subsection{Profiled Formulation}

Consider the linear SEM given by a mixed graph
$\mathcal{G}=(V,E_\rightarrow,E_\leftrightarrow)$, as defined in the
Introduction.  The general framework laid out in
Section~\ref{sec:empLik} can be applied directly to such a model by
taking the covariance matrix of the observations $Y_i$ as the general
parameter $\theta$.  We may then define an EL at a pair of parameter
matrices $(B,\Omega)$ as the EL at the covariance matrix
$\Sigma(B,\Omega)$ from~(\ref{eq:y_distribution}).  In such a direct
application to the linear SEM, the log-EL function
$\ell( B, \Omega; Y)$ is the maximum of the log-EL $\ell(p; Y)$ over
the set
\begin{equation}\label{eq:naive}
\begin{aligned}
&\mathcal{P}_{\Sigma(B,\Omega)} \;=\;\left\{p\in\mathcal{P}_n:  \sum_{i=1}^n
  p_i Y_i  = 0, \; \sum_{i=1}^n p_i\left[\ve\left(Y_iY_i^T\right) - \ve\, \Sigma(B, \Omega)\right] = 0\right\}. \\
\end{aligned}
\end{equation}
Here, $\ve$ is the half-vectorization operator for symmetric
matrices. Under this formulation, there are $m$ constraints for the
mean and $m(m + 1)/2$ covariance constraints, and the MELE is computed
by optimization with respect to the pair of $m\times m$ matrices
$(B,\Omega)$, with $\Omega$ restricted to be positive definite.

Inspection of the covariance constraints reveals that a great
simplification is possible by profiling out $\Omega$.  Indeed, the
covariance constraint yields an explicit solution for $\Omega$ given
$B$, $Y = (Y_1, \ldots Y_n)$, and $p$.  Specifically, with
$\Pi=\text{diag}(p_1,\dots,p_n)$, we have
\begin{equation}
  \label{eq:equiv-equations}
Y\Pi Y^T = \Sigma(B,\Omega) = (I - B)^{-1} \Omega(I - B)^{-t} \;\iff\; (I - B)Y \Pi Y(I-B)^{t} = \Omega.
\end{equation}
The entries of $\Omega$ are either constrained to be zero or freely
varying.  No constraints arise from the freely varying entries, and we
may base estimation of $B$ on only the structural zeros in $\Omega$,
that is,
\[\left\{(I - B)Y \Pi Y(I-B)^{t}\right\}_{uv} = 0 \quad \forall \{u,v\} \not \in E_\leftrightarrow.\]
Once a solution for $B$ is found, we may simply compute
$\Omega=\Omega(B)$ by setting
$
\omega_{uv} = \left\{(I - B)Y \Pi Y(I-B)^{t}\right\}_{uv}
$
for $u=v$ or $\{u,v\}\in E_\leftrightarrow$.  The profile log-EL in this approach
is the function
\begin{equation}
\ell(B;Y) \;=\; \max_{p\in\mathcal{P}_B}
\ell(p;Y)\label{eq:profile:inner}
\end{equation}
obtained from the set of weight vectors
\begin{equation}
\mathcal{P}_B \;=\; \left\{p\in\mathcal{P}_n:  \sum_{i=1}^n
  p_i Y_i  = 0, \; \sum_{i=1}^n p_i \left(Y_{vi} - \sum_{s \in
    \pa(v)}\beta_{vs}Y_{si}\right) \left(Y_{ui} - \sum_{t \in
    \pa(u)}\beta_{ut}Y_{ti}\right) = 0 \quad \forall\{v,u\} \not \in
  E_\leftrightarrow \right\}.\label{eq:profiled}
\end{equation}
The MELE $\check B$ is found by maximizing $\ell(B;Y)$ over the set
$\mathcal{B}(G)$ from~(\ref{eq:par-space-B}), and then
$\check\Omega=\Omega(\check B)$.  We emphasize that there are now only
$m(m - 1)/2 - |E_\leftrightarrow|$ covariance constraints, and only
the matrix $B$ needs to be optimized.

Following a standard strategy,
we evaluate $\ell(B; Y)$, that is, solve the ``inner maximization''
in~(\ref{eq:profile:inner}) at a fixed $B$, through the dual problem.
Strong duality holds because the constraints in (\ref{eq:profiled})
are linear in the weights $p_i$.  Let $G(Y_i,B)$ be the map with
coordinates $G_v(Y_i,B)=Y_{iv}$ for $v\in V$ and
$G_{uv}(Y_i,B)=g_u(Y_i,B)g_v(Y_i,B)$ for each nonedge
$\{u,v\} \notin E_\leftrightarrow$, where
$g_v(Y_i,B)=Y_{vi} - \sum_{s \in \pa(v)}\beta_{vs}Y_{si}$.  With dual
variables $\alpha\in\mathbb{R}$ and
$\lambda\in\mathbb{R}^{m+m(m - 1)/2 - |E_\leftrightarrow|}$, the
Lagrangian for the inner optimization over $\mathcal{P}_B$ is
\begin{equation}
\begin{aligned}
L_B( p, \alpha, \lambda ) & = -\sum_{i=1}^n \log(p_i) + \alpha \left(\sum_{i=1}^n
  p_i - 1\right) + n \sum_{i=1}^n p_i\left(\sum_{v\in V}
  \lambda_{v}   G_{v}(Y_i, B) + \sum_{\{u,v\}\notin E_\leftrightarrow} \lambda_{uv} G_{uv}(Y_i, B)\right).
\end{aligned}
\end{equation}
Maximizing over the weights, with $\alpha=n$, we find
\begin{equation}
  \label{eq:optimal-pi}
  \check p_i = \frac{1}{n} \frac{1}{1 + \sum_{v\in V}
  \lambda_{v}   G_{v}(Y_i, B) +\sum_{\{u,v\}\notin E_\leftrightarrow} \lambda_{uv} G_{uv}(Y_i, B)},
\end{equation}
and substitution into $L_B$ yields a convex dual
function of $\lambda$.  We optimize it via
Newton-Raphson with a backtracking line search to ensure
$0 \leq p_i \leq 1$.

In the ``outer maximization'', we optimize $\ell(B; Y)$ with respect
to $B$ using a gradient based quasi-Newton method.  Although we can
only evaluate $\ell(B; Y)$ numerically, once we have the optimal dual
variables $\lambda$ and the corresponding weights
from~(\ref{eq:optimal-pi}), we can analytically compute the gradient
of $\ell(B; Y)$ as
\begin{equation}
\nabla \ell(B; Y) = - \lambda^T\sum_{i=1}^n \check p_i \nabla G(Y_i,
B);
\end{equation}
see \citet{chaudhuri2016constrained}.
The
Hessian, however, cannot be computed in closed form, so we use BFGS
which builds an approximate Hessian via the gradient.

Although both formulations yield the same MELE, the profile approach
from (\ref{eq:profile:inner}) and (\ref{eq:profiled}) drastically
eases difficulties (i) and (ii) discussed in
Section~\ref{sec:empLik} as the number of estimating equations for the
covariance is reduced to $m + m(m - 1)/2 - |E_\leftrightarrow|$.  This
reduces the number of dual variables to optimize in the inner
maximization.  Moreover, when profiled, the outer maximization
searches over only $B\in\mathcal{B}(G)$ while the naive direct
formulation from~(\ref{eq:naive}) requires a search over both
$B\in\mathcal{B}(G)$ and $\Omega\in\mathcal{W}(G)$; in particular,
positive definiteness of $\Omega$ needs to be respected in the naive
optimization.  Finally, satisfying the convex hull condition for the
error covariances typically requires a simultaneous good choice of $B$
and $\Omega$. The directed edge weights can be easily initialized with
regression estimates, but the covariance parameters are typically more
difficult to specify.  In Section \ref{sec:simulations}, we show that
the computational advantages produce substantial gains in computation
time and converge to a valid stationary point at a much higher
proportion of the time even when the sample size is small.

\subsection{Small Sample Improvements}
\label{sec:small-sample-impr}

In addition to reformulating the optimization problem, we make two
modifications to improve the performance of EL for SEMs.  We apply
\emph{adjusted empirical likelihood} (AEL) to improve the search for a MELE
and use \emph{extended empirical likelihood} (EEL) to improve the coverage of
confidence intervals.

\citet{chen2008adjusted} proposed AEL to alleviate the convex hull
problem mentioned in difficulty (i) above.  The adjustment amounts
to adding a pseudo-observation whose contribution to the estimating
equations is
$G_{n+1}(B) = -a_n \bar G(B) = -a_n \frac{1}{n}\sum_{i=1}^n G(Y_i, B)$
for a choice of $a_n>0$.  Adding this term ensures that no matter the
value of $B$, the set of feasible weight vectors, now in $\mathcal{P}_{n+1}$, is
non-empty.  Hence, the log-AEL $\ell^a(B; Y)$ and its gradient
\begin{equation}
  \nabla \ell^a(B; Y) = - \lambda^T\sum_{i=1}^n \left[\check p_i +\left(-\frac{a_n}{n}\right)\check p_{n+1} \right] \nabla G(Y_i, B)
\end{equation}
are
well defined across the entire parameter space.
\citet{chen2008adjusted} show that AEL retains the asymptotic
properties of the original EL when $a_n = o(n^{2/3})$, and suggest
$a_n = \log(n) / 2$.  We adopt this choice.

The terms in our covariance constraint are products,
$G_{uv}(Y_i,B)=g_v(Y_i,B)g_u(Y_i,B)$.  This is generally not true for
the added term $G_{n+1}(B)$ and is not clear how to define an
appropriately sparse and positive definite matrix
$\Omega(B)$ using AEL weights.  Thus, we propose finding an
estimate $\check B$ that maximizes the AEL and computing
$\check \Omega=\Omega(\check B)$ based on weights from
recalculating the original EL at $\check B$.  As demonstrated in
our numerical experiments, this approach alleviates some convergence
issues but, of course, the original EL may be zero at the AEL
maximizer $\check B$, in which case we do not have an estimate of
$\Omega$ and say that the AEL procedure has not converged.

To address undercoverage of confidence regions for smaller samples, as
described in difficulty (iii), we adopt the EEL of
\citet{tsao2014extended} who show that their $\chi^2$-calibrated EEL
confidence regions outperform those from the original EL.  Assuming
the MELE $\check B$ exists, a positive EEL may be defined for any
matrix $B\in\mathcal{B}(G)$ by taking the original EL at a convex
combination of $B$ and $\check B$.  Specifically, the log-EEL
suggested by \citet{tsao2014extended} is
\begin{equation}
\ell^e(B; Y) = \ell(h^{-1}(B, Y), Y)
\end{equation}
for $ h(B, Y) = \check B + \gamma(n, \ell(\theta; Y))(B - \check B)$
with $\gamma(n, \ell(B; Y)) = \left(1 + \frac{2(-n\log(n) - \ell(B; Y))}{2n}\right)$.


\subsection{Asymptotic Distribution of Empirical Likelihood
  Estimators} \label{sec:asymp}

It follows from \citet[Theorem 1]{qin1994empirical} that under the
following assumptions, MELEs are asymptotically normal and empirical
likelihood ratios converge to $\chi^2$ limits.  The same is true for
the modifications from Section~\ref{sec:small-sample-impr}.

\begin{proposition}\label{lem:conditions}
  Let $G=(V,E_\rightarrow,E_\leftrightarrow)$ be a mixed graph, let
  $B_0\in\mathcal{B}(G)$ and $\Omega_0\in\mathcal{W}(G)$.  Let
  $\epsilon$ be a zero-mean random vector with covariance matrix
  $\Omega_0$.  Assume that:
  \begin{enumerate}[label=(\alph*)]
  \item The Jacobian of the parametrization
    $(B,\Omega)\mapsto \Sigma(B,\Omega)$ defined on
    $\mathcal{B}(G)\times\mathcal{W}(G)$ has full rank at
    $(B_0,\Omega_0)$.
  \item The joint distribution of $\epsilon$ and
    $\epsilon^{(2)}=(\epsilon_v\epsilon_u: v,u\in V)$ is
    non-degenerate and has finite third moments.
  \end{enumerate}
  If $Y_1,\dots,Y_n$ is an i.i.d.~sample from the distribution
  determined by $(B_0,\Omega_0,\epsilon)$, i.e., the distribution of
  $(I-B_0)^{-1}\epsilon$, then the MELE
  $\check \theta = \left(\ve \left[ \check B \right], \ve \left[
      \Omega (\check B) \right] \right)$ is asymptotically
  normal with
\begin{equation}
\begin{aligned}
  \sqrt{n}\left(\check \theta - \theta_0\right) \rightarrow N(0, V), \qquad
  V^{-1} = \mathbb{E}\left( \frac{\partial G(Y, \theta_0)}{\partial \theta}\right)^t \mathbb{E}[G(Y, \theta_0)G(Y, \theta_0)^t]^{-1} \mathbb{E}\left( \frac{\partial G(Y, \theta_0)}{\partial \theta}\right).
\end{aligned}
\end{equation}
Here, $G$ is given by the estimating equations corresponding to the
naive formulation in (\ref{eq:naive}). Furthermore, EL
ratio statistics have $\chi^2$ limits.  In particular, for
$q = m + m(m + 1)/2$ and
$d = |E_\rightarrow| + |E_\leftrightarrow| + m$, we have
\begin{equation}
\begin{aligned}
2 \left(-n\log(n) - \ell(\check \theta; Y)  \right) &\rightarrow \chi^2_{(q - d)}, & \qquad \qquad  & 2 \left[\ell(\check \theta; Y) -  \ell(\theta_0; Y) \right] &\rightarrow \chi^2_d.
 \end{aligned}
 \end{equation}
\end{proposition}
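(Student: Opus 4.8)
The plan is to verify that the estimating-equation system underlying the naive formulation~(\ref{eq:naive}) satisfies the regularity conditions of \citet[Theorem 1]{qin1994empirical}, from which both the asymptotic normality and the two $\chi^2$ limits follow. First I would set $\theta=(\ve B,\ve\Omega)$ ranging over the $d$ free parameters indexed by $E_\rightarrow$, $E_\leftrightarrow$, and the $m$ diagonal variances, and write out the $q=m+m(m+1)/2$ estimating functions: the mean coordinates $G_v(Y,\theta)=Y_v$ and the covariance coordinates $G_{uv}(Y,\theta)=(YY^t)_{uv}-\Sigma_{uv}(B,\Omega)$. The estimating equations hold at $\theta_0$ because the sample is centered and $\mathbb{E}(YY^t)=\Sigma(B_0,\Omega_0)$ by~(\ref{eq:y_distribution}).

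The two substantive conditions to check are the full column rank of $\mathbb{E}[\partial G/\partial\theta]$ and the positive definiteness of $\mathbb{E}[GG^t]$, and these are exactly what assumptions (a) and (b) supply. For the first, the mean coordinates do not depend on $\theta$, so $\mathbb{E}[\partial G/\partial\theta]$ reduces to the negative Jacobian of $\theta\mapsto\ve\Sigma(B,\Omega)$ padded with zero rows; its full column rank is precisely the local identifiability asserted in assumption (a). For the second, I would write $Y=(I-B_0)^{-1}\epsilon=:A\epsilon$ and observe that $G$ is an affine function of the stacked vector $(\epsilon,\epsilon^{(2)})$: the mean block is the linear map $\epsilon\mapsto A\epsilon$, and the covariance block is the linear map in $\epsilon^{(2)}$ induced by $A\epsilon\epsilon^t A^t$. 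Since $A$ is invertible, the resulting transformation $L$ is a linear bijection, so $\mathbb{E}[GG^t]=L\,\mathrm{Cov}(\epsilon,\epsilon^{(2)})\,L^t$ is positive definite exactly when $\mathrm{Cov}(\epsilon,\epsilon^{(2)})$ is nondegenerate, which is assumption (b). The same assumption, through its finite-third-moment clause, transfers to finite third moments of $G$ because $G$ is affine in $(\epsilon,\epsilon^{(2)})$.

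The remaining Qin--Lawless conditions, namely continuity and local boundedness of the first and second $\theta$-derivatives of $G$, are immediate since the only $\theta$-dependence enters through $\Sigma(B,\Omega)$, which is smooth in a neighborhood of $(B_0,\Omega_0)$ because $\det(I-B_0)\neq0$, and these derivatives are deterministic (free of $Y$). Applying \citet[Theorem 1]{qin1994empirical} then yields the sandwich variance $V$ together with the two likelihood-ratio limits, where the overidentification statistic $2(-n\log n-\ell(\check\theta;Y))$ picks up $q-d$ degrees of freedom and the parameter statistic $2[\ell(\check\theta;Y)-\ell(\theta_0;Y)]$ picks up $d$.

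Finally, for the modifications of Section~\ref{sec:small-sample-impr}, the profiled EL of~(\ref{eq:profile:inner}) returns the same MELE as the naive formulation, so its limit is unchanged; the AEL differs only by one pseudo-observation whose contribution is of order $a_n/n$ with $a_n=o(n^{2/3})$, so by \citet{chen2008adjusted} the first-order asymptotics are preserved; and the EEL is a smooth monotone reparametrization with $\gamma\to1$, so by \citet{tsao2014extended} the relevant ratio statistics retain their $\chi^2$ limits. I expect the main obstacle to be the bookkeeping in the covariance block of $L$, namely identifying the invertible linear map on $\epsilon^{(2)}$ induced by the congruence $M\mapsto AMA^t$ (essentially a duplication-matrix computation) and confirming its nonsingularity, since this is what converts the abstract nondegeneracy in (b) into positive definiteness of $\mathbb{E}[GG^t]$.
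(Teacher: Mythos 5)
Your proposal is correct, and it rests on the same key lemma as the paper's proof: both arguments reduce the proposition to checking the regularity conditions of Theorem 1 of \citet{qin1994empirical}, with assumption (a) giving the rank-$d$ condition on $\mathbb{E}[\partial G/\partial\theta]$ and assumption (b) giving positive definiteness of $\mathbb{E}[GG^t]$ plus the needed integrability. The genuine difference is the choice of estimating functions, which redistributes the work among the conditions. The paper first rewrites the covariance constraints in residual-product form via~(\ref{eq:equiv-equations}), taking $G_{uv}(y,B)=\bigl(y_v-\sum_{s\in\pa(v)}\beta_{vs}y_s\bigr)\bigl(y_u-\sum_{t\in\pa(u)}\beta_{ut}y_t\bigr)-\omega_{uv}$; then $G_{uv}(Y,B_0)=\epsilon_v\epsilon_u-\omega_{uv}$ exactly, so positive definiteness is immediate from (b), but the $\theta$-derivatives of $G$ are now quadratic in $y$, so the derivative conditions require the moment bounds from (b). You instead keep the literal naive functions $G_{uv}(Y,\theta)=Y_uY_v-\Sigma_{uv}(B,\Omega)$ --- arguably more faithful to the statement, which explicitly refers to~(\ref{eq:naive}) --- making the derivative conditions trivial (the $\theta$-derivatives are deterministic and smooth near $\theta_0$ since $\det(I-B_0)\neq 0$), at the price of the congruence argument for positive definiteness: $(Y,\ve(YY^t))$ is an invertible linear image of $(\epsilon,\ve(\epsilon\epsilon^t))$, the covariance block being $\ve(M)\mapsto\ve(AMA^t)$ with $A=(I-B_0)^{-1}$, which is nonsingular. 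That step, which you rightly flag as the main bookkeeping burden, does go through, and the two routes are equivalent: for fixed $B$ the two constraint systems differ by an invertible linear transformation, which leaves the feasible set, the MELE, and the sandwich variance unchanged (the transformation's $\theta$-dependence contributes nothing at $\theta_0$ because $\mathbb{E}[G(Y,\theta_0)]=0$). Your closing remarks on the profile/AEL/EEL modifications are not required for the proposition itself, but they correctly mirror the paper's surrounding discussion.
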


We sketch the proof of the proposition in the appendix.

If the rank condition from (a) holds, then the rational map
$(B,\Omega)\mapsto \Sigma(B,\Omega)$ has full rank Jacobian at almost
all choices of $(B,\Omega)$, and the map is generically finite-to-one.
There is thus a connection to local/finite identifiability of
$(B,\Omega)$ from the covariance matrix.  For state-of-the-art methods
for determining identifiability see
\cite{foygel2012half,chen2016old,drton:weihs:2016}.

\section{Numerical Simulations} \label{sec:simulations}
We now show a series of numerical experiments to evaluate the effectiveness of the proposed methods and compare the results to existing methods.

\subsection{Convergence of Optimizers for Naive vs Profile Formulation}\label{sec:naiveProf}
We first compare the naive/direct procedure which explicitly estimates
$B$ and $\Omega$ to the profiled procedure which only involves
$B$. For both procedures, we use the original EL and adjusted EL.  We
also consider a hybrid method, which first finds the maximum AEL point
to initialize a search which then uses original EL. We randomly
generate acyclic mixed graphs with 8 nodes, 10 directed edges, and 6
bidirected edges. We randomly select directed edges $u \rightarrow v$
from all pairs such that $u < v$ and then select bidirected edges
$u \leftrightarrow v$ from the remaining unselected pairs.  This setup
ensures that $(B,\Omega)$ are generically identifiable from
$\Sigma(B,\Omega)$ by the result of \cite{brito:2002}.

We generate random true parameter matrices $B=(\beta_{uv})$ and
$\Omega=(\omega_{uv})$ as follows.  The coefficients $\beta_{uv}$ are
drawn uniformly from $(-1, .-.2) \cup (.2, 1)$.  For $\Omega$, we draw
off-diagonal elements $\omega_{uv} = \omega_{vu}$, $u\not=v$,
uniformly from $(-.8, .-.3) \cup (.3, .8)$.  We then use exponential
draws to set
$\omega_{vv} = \sum_{u \neq v} |\omega_{uv}| + 1 + \text{exp}(1)$.

We consider errors from four distributions.  First, we generate
centered multivariate Gaussian errors with covariance matrix $\Omega$.
Second, we generate them from a multivariate $T$-distribution with 4
degrees of freedom, which we denote by $T_4$, again with expectation
zero and covariance matrix $\Omega$.  Third, we consider log-normal
errors.  In this case, we simulate a multivariate Gaussian vector $Z$,
centered and with covariance matrix equal to the correlation matrix
$C$ that corresponds to $\Omega$. We then set the error vector to
$\epsilon = \exp(Z) - \sqrt{e}$, which yields covariance matrix
$e(\exp(C) - 1)$.  Finally, in order to draw a multivariate
distribution with recentered gamma marginals and covariance $\Omega$,
we follow the steps:
\begin{enumerate}
\item Draw $\epsilon_v \sim \text{gamma}(\text{shape} = \omega_{vv} - \sum_{v \neq u} |\omega_{uv}|, \text{scale} = 1)$.
\item For each $\{u,v\}\in E_\leftrightarrow$, generate $\delta_{uvi}
  \sim \text{gamma}(\text{shape} = |\omega_{uv}|, \text{scale} = 1)$
  and a random sign $\xi_{uv}\in\{-1, 1\}$.
\item If $\omega_{uv} > 0$, add $\xi_{uv} \delta_{uvi}$ to
  $\epsilon_{ui}$ and $\epsilon_{vi}$.  If $\omega_{uv} < 0$, add
  $\xi_{uv} \delta_{uvi}$ to $\epsilon_{ui}$ and
  $-\xi_{uv} \delta_{uvi}$ to $\epsilon_{vi}$.
\item Subtract the true mean from each error term so that it has mean 0.
\end{enumerate}
All optimizations are initialized with a procedure from
\citet{drton2016computation}, where the free elements of $B$ are
calculated via least squares.  The resulting residuals are used to
initialize the non-zero values $\omega_{uv}$.  If a row is not
diagonally dominant, the off-diagonal elements are scaled so that
$\sum_{j\neq i} |\omega_{ij}| <.9 \times \omega_{ii}$ to ensure $\Omega$
is positive definite.

\begin{figure}
  \centering
  \includegraphics{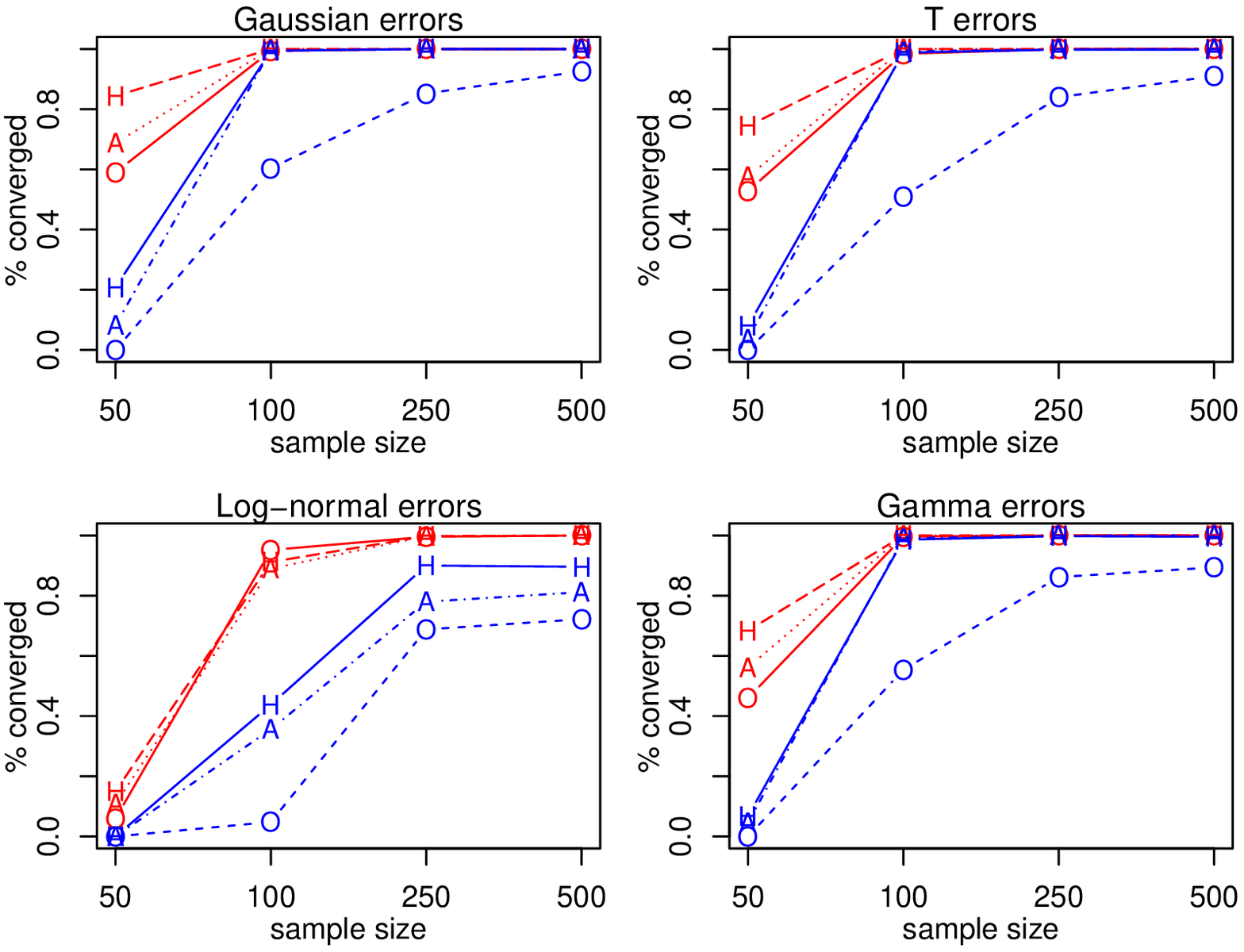}
  \caption{\label{fig:naiveProfConv}Proportion of 500 simulations
    which converge to a valid stationary point, plotted versus the
    sample size. O- original EL; A- adjusted EL; H- hybrid EL. Red
    points indicate the profile formulation; blue points indicate the
    naive formulation.}
\end{figure}

  \begin{figure}
    \centering
  \includegraphics{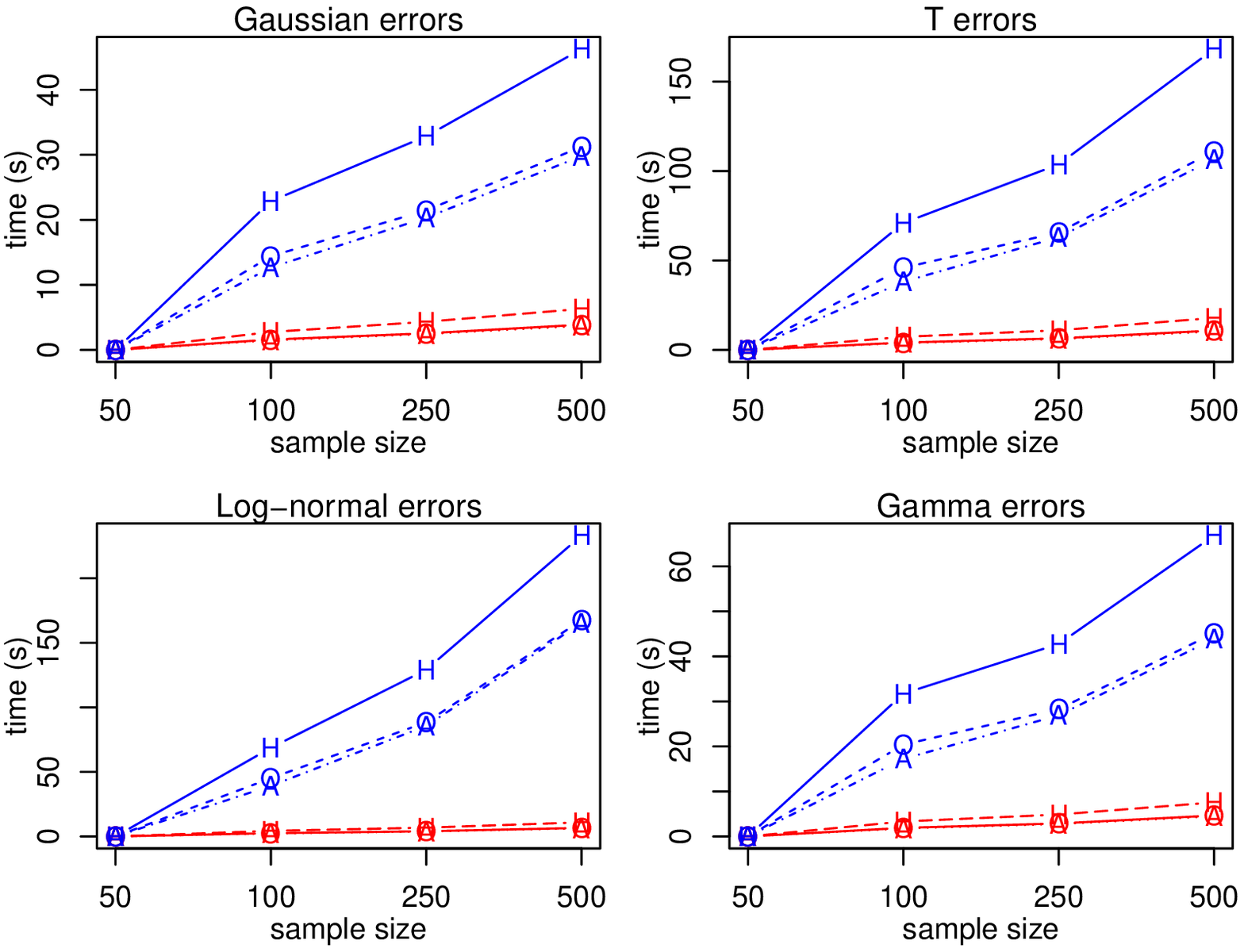}
  \caption{\label{fig:naiveProfTime}The average run time in seconds among the simulations in which all methods converge to a valid stationary point, plotted versus the
    sample size. O- original EL; A- adjusted EL; H- hybrid EL. Red points indicate the profile formulation; blue points indicate the naive formulation.}
\end{figure}

Figure \ref{fig:naiveProfConv} shows that in all cases the profiled
formulation converges at least as often as the naive formulation.  AEL
converges more often than original EL, and the hybrid procedure converges
the most often.  Even at a sample size of $n =100$, the profiled
problem converges nearly every single time, except in the case of
log-normal errors. Figure \ref{fig:naiveProfTime} shows that the profiled
form can be up to 40 times faster on average than the naive form.

\subsection{Estimation Error}

We now explore the estimation errors resulting from different
approaches.  We compare both original EL and AEL to the Gaussian MLE
computed as in \cite{drton2009computing}, GLS, and WLS.  The latter
two estimates are computed using the R package \texttt{lavaan}
\citep{lavaan}.  We also include a hybrid procedure that finds the
Gaussian MLE $\hat B$ and then uses the resulting residuals and the
maximum EL weights at $\hat B$ to form an estimate
$\check \Omega = (I - \hat B)^t Y \Pi Y ^t(I - \hat B)^t$.  Note that
the $T_4$ distribution does not have finite 6th moments, so the
limiting distributions from Proposition~\ref{lem:conditions} may not
hold; however, all estimation procedures still appear to be
consistent.

Proceeding as in Section \ref{sec:naiveProf}, we generate 1000 graphs
for each error distribution and sample size.  To measure estimation
accuracy, we average the relative error
$\Vert \ve(\check \Sigma) - \ve(\Sigma) \Vert^2/\Vert
  \ve(\Sigma)\Vert^2$ for $\Sigma(B, \Omega)$ across each of the
simulation runs in which all methods converge; recall Figure
\ref{fig:naiveProfConv}.  The results are shown in Figure
\ref{fig:msqe}.

\begin{figure}[htbp]
\centering
\includegraphics{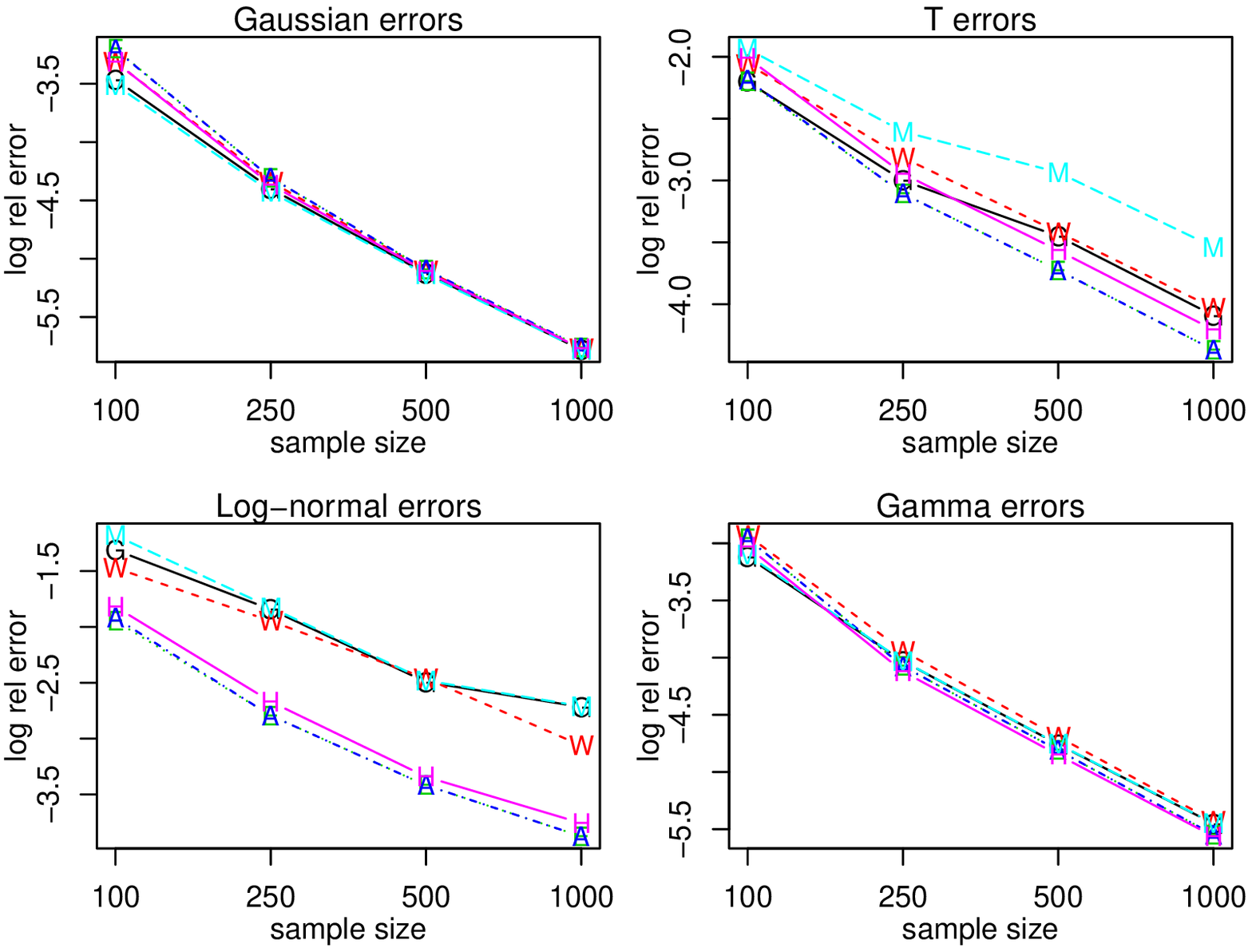}
\caption{\label{fig:msqe}Log mean relative squared estimation error in $\Sigma$ over 1000 simulations, plotted versus the
    sample size. Average is only taken on simulations in which all methods converged. A- adjusted EL; E- empirical likelihood; G- generalized least squares; H- hybrid Gauss/EL; M- Gaussian MLE; W- weighted least squares.}
\end{figure}

\begin{figure}[htbp]
\centering
\includegraphics{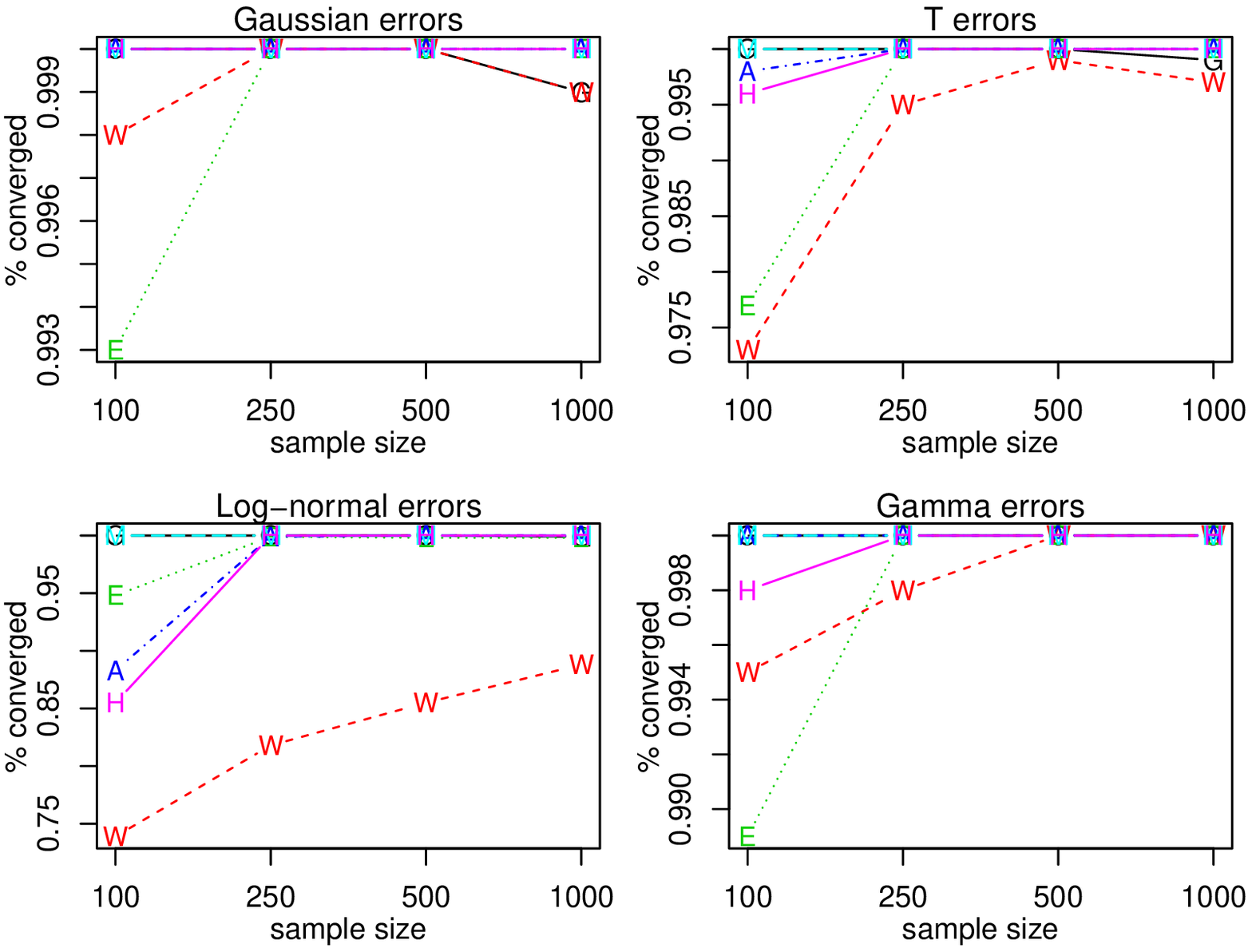}
\caption{\label{fig:compareConv}Proportion of times (over 1000 simulations) the method converged to a local maximum of the objective function, plotted versus the
    sample size. A- adjusted EL; E- empirical likelihood; G- generalized least squares; H- hybrid Gauss/EL; M- Gaussian MLE; W- weighted least squares.}
\end{figure}

In general, there is no substantial difference in accuracy between the
adjusted and original empirical likelihood methods. For the Gaussian
case, MLE and GLS perform better than the methods which do not assume
Gaussianity, but the improvement is slight. In the $T_4$ and
log-normal case, the EL procedures perform substantially better than
the other methods. Finally, for the gamma case, the hybrid method
seems to outperform the other methods, followed closely by the EL
methods; however, the differences between the methods are not
substantial.
In Figure \ref{fig:compareConv}, all methods converge
more than 95\% of the time in all distributions, except for the log-normal
case. In this case, the WLS procedure still only converges roughly
90\% at $n = 1000$.

\subsection{Confidence Regions}

We examine the coverage frequencies of joint confidence regions for
the parameters $\beta_{uv}$ and $\omega_{uv}$.  We construct Wald
regions using the estimates of $\text{Var}(\hat \theta)$ from the
Gaussian MLE, GLS, and WLS. We also calculate a sandwich variance
estimator using the Gaussian likelihood as the estimating equations
and the asymptotic EL variance via \cite{qin1994empirical}.
Alternatively, we calculate the EL at $(B_0, \Omega_0)$ using original
EL, EEL, AEL. We then compare the resulting EL
ratio to its asymptotic $\chi^2$ distribution. If a method
does not converge, we count this as a case in which the confidence
region does not cover the true parameters.

At each sample size and error distribution, we construct 1000 graphs
with 6 nodes, 8 directed edges and 4 bidirected edges from the
procedure described in Section~\ref{sec:naiveProf}. For the $T$
distribution, we increase the degrees of freedom to 7 to ensure
Proposition~\ref{lem:conditions} applies. The coverage rates for 90\%
confidence intervals are shown in Figure \ref{fig:confInt}.  Based on
the displayed results, regions obtained from the Gaussian MLE and GLS
can only be recommended when the errors are (close to) Gaussian.  The
EEL method performs the best, staying close to the parametric methods
in the Gaussian case and doing the best in most non-Gaussian
scenarios.  The sandwich method is another good choice.  However, we
also observe that in order to achieve nominal coverage levels very
large sample sizes may be required.


\begin{figure}[htbp]
\centering
\includegraphics{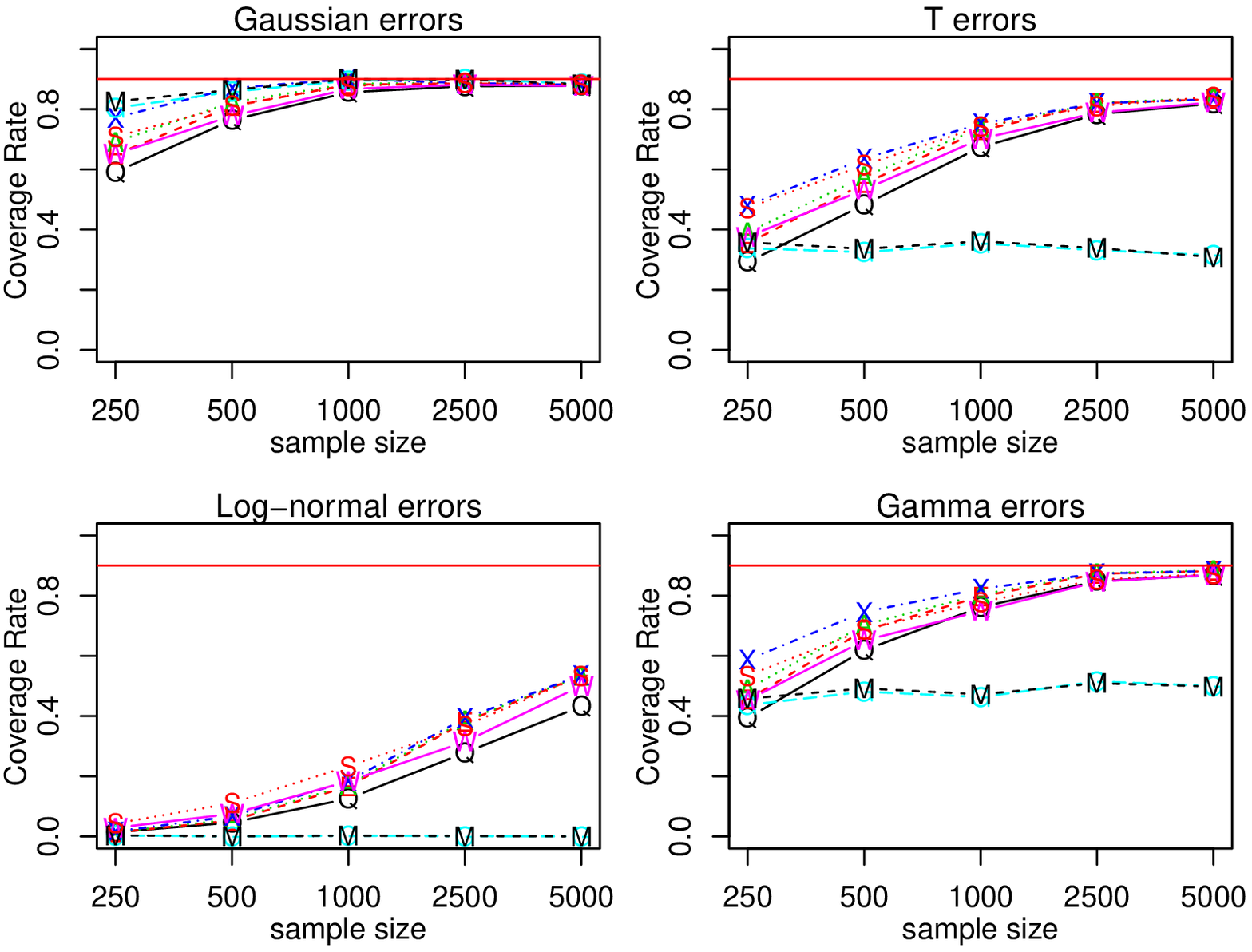}
\caption{\label{fig:confInt}Coverage frequencies of joint confidence intervals. A- adjusted EL; X- extended EL; G- generalized least squares; Q- Qin and Lawless asymptotic EL variance; M- Gaussian MLE; S- sandwich estimator; W- weighted least squares; E- EL direct $\chi^2$ calibration.}
\end{figure}

\subsection{Protein Signaling Network}\label{sec:sachsExample}

\citet[Figure 2]{sachs2005causal} present a signaling network of 11
observed molecules and 13 unobserved molecules. The black edges in
Figure \ref{fig:protein_network} give a plausible mixed graph
representation of that network and was also considered by
\citet{drton2016computation}.  A log-transformation of the available
protein expression data improves Gaussianity but leaves the
distribution of some of the variables skewed and/or multimodal.  We consider two separate tests; each compares the SEM sub-model corresponding to the graph of black edges against a full model which adds one of the two red edges also shown in Figure
\ref{fig:protein_network}. Note that the added red edge from $\text{Mek} \rightarrow \text{PKA}$ induces a directed cycle.   For
the log-transformed data, we perform a Gaussian as well as an
empirical likelihood ratio test.  For the test involving the directed edge $\text{Mek} \rightarrow \text{PKA}$, the Gaussian LR is .416 (p-value = .52) and the ELR is 4.379 (p-value = .04). For the test involving the bidirected edge $\text{Akt} \leftrightarrow \text{PIP2}$, the Gaussian LR is 15.216 (p-value < .001) and the ELR is .782 (p-value = .37).  While we do not have a
certified gold standard network, and the implicit assumption of
linearity may not be appropriate for all postulated relationships,
these examples present situations in which the Gaussian assumption is
particularly inappropriate and may cause concerns for a practitioner.

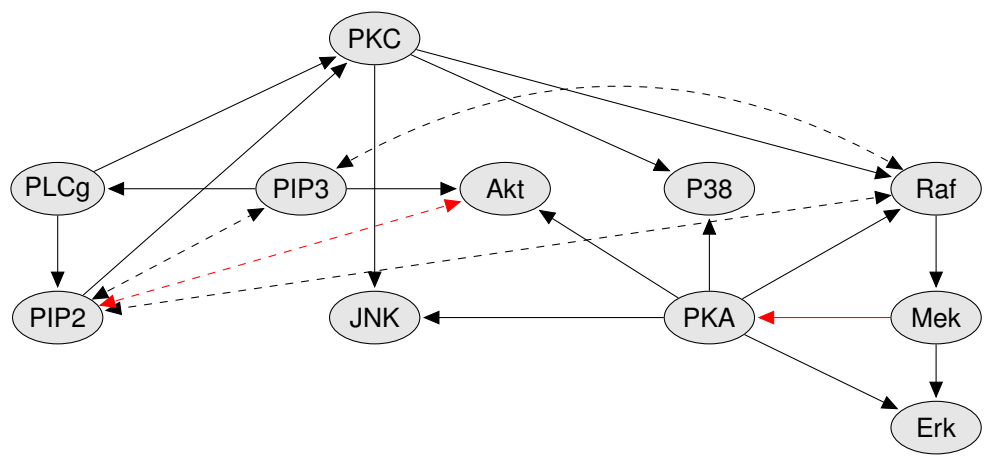
\begin{figure}[t]
  \centering
	\begin{tikzpicture}[->,>=triangle 45,shorten >=1pt,
            auto,
            main node/.style={ellipse,inner sep=0pt,fill=gray!20,draw,font=\sffamily,
             minimum width = 1.2cm, minimum height = .7cm}]

            \node[main node] (PKC) {PKC};
            \node[main node] (JNK) [below=3cm of PKC] {JNK};
			\node[main node] (PIP2) [left=3cm of JNK] {PIP2};
            \node[main node] (PLCg) [above=1cm of PIP2] {PLCg};
          	\node[main node] (PIP3) [right=2cm of PLCg] {PIP3};
          	\node[main node] (Akt) [right=1.5cm of PIP3] {Akt};
          	\node[main node] (P38) [right=1.5cm of Akt] {P38};
          	\node[main node] (Raf) [right=1.8cm of P38] {Raf};
          	\node[main node] (PKA) at (P38 |- JNK) {PKA};
          	\node[main node] (Mek) at (PKA -| Raf) {Mek};
          	\node[main node] (Erk) [below=.75cm of Mek] {Erk};

          	 \path[color=black,style={->}]
          	 (PKC) edge node {} (JNK)
          	 (PKC) edge node {} (Raf)
          	 (PKC) edge node {} (P38)

          	 (PLCg) edge node {} (PIP2)
			       (PIP3) edge node {} (PLCg)
			       (PIP3) edge node {} (Akt)
		   		   (Raf) edge node {} (Mek)
       		(PKA) edge node {} (JNK)
       		(PKA) edge node {} (P38)
       		(PKA) edge node {} (Erk)
       		(Mek) edge node {} (Erk)
       	  (PIP2) edge node {} (PKC)
          (PKA) edge node {} (Akt)
          (PLCg) edge node {} (PKC)
          (PKA) edge node {} (Raf);

       		\path[color=black,style={<->, dashed}]
			       (PIP2) edge node {} (PIP3)
			          (Raf) edge node {} (PIP2)
       		;

          \path[color=red,style={->}]
             (Mek) edge node {} (PKA);

       	\path[color=black,style={<->, dashed, bend right}]
       					(Raf) edge node {} (PIP3)
       		       		;

        \path[color=red,style={<->, dashed}]
                (PIP2) edge node {} (Akt)
                    ;

            \end{tikzpicture}
  \caption{Plausible mixed graph for the protein-signaling network dataset. The relevant sub-model can be formed by removing the red bidirected edge between PIP2 and Akt and the red edge from MEK to PKA.}
  \label{fig:protein_network}
\end{figure}

\section{Discussion}\label{sec:discussion}

In this article, we showed that EL methods are an attractive
alternative for estimation and testing of non-Gaussian linear SEMs.
Our approach of profiling out the error covariance matrix $\Omega$
drastically reduces computational effort and creates a far more
tractable and reliable estimation procedure.  Furthermore, we showed
that the use of AEL may further improve convergence of optimizers,
particularly, when the sample size is small and the errors are skewed.
EEL was seen to drastically improve the
coverage rate of the joint confidence intervals.

Our EL methods are applicable under very few distributional
assumptions, all the while allowing statistical inference in close to
analogy to parametric modeling.  When the data is non-Gaussian, the
modified EL methods outperform the other methods we considered in
almost all scenarios we explored.  This concerns the proportion of
times a valid estimate is returned, statistical efficiency, and also
confidence region coverage.
While there remains significant room for improvement in the design of
confidence regions,
we conclude that EL methods are a valuable tool for applications of
linear SEMs to non-Gaussian data.

\bigskip
{\bf Appendix}
\smallskip

\begin{proof}[Proof of Proposition~\ref{lem:conditions}]
  We recall our notation $\theta=(B,\Omega)$ and
  $\theta_0=(B_0,\Omega_0)$.  Based on the right-most expression
  in~(\ref{eq:equiv-equations}), the considered naive/direct
  estimating equations may be based on the function $G(y,B)$ with
  coordinates
  \begin{align*}
    G_v(y, B) &= y_v, & G_{uv}(y,B) &= \left(y_v-
      \sum_{s\in\pa(v)}\beta_{vs}y_s\right)\left(y_u-
      \sum_{t\in\pa(u)}\beta_{ut}y_t\right) -\omega_{uv},
  \end{align*}
  for $v\in V$ and $\{u,v\} \in V \times V$, respectively.

  Our claim follows from Theorem 1 of \citet{qin1994empirical} under
  the following conditions:
  \begin{enumerate}[label=(\arabic*)]
  \item $\mathbb{E}(G(Y_i, \theta_0)G(Y_i, \theta_0)^t)$ is positive definite.
  \item In a neighborhood of the d-dimensional parameter $\theta_0$,
    the derivative $\frac{\partial G(y, \theta)}{\partial \theta}$ is
    continuous, and
    $\left\Vert \frac{\partial G(y, \theta)}{\partial \theta}
    \right\Vert$ and $\left \Vert G(y, \theta)\right\Vert^3$ are
    bounded by an integrable function $M_1(y)$.
  \item $\mathbb{E}\left( \frac{\partial G(Y_i, \theta_0)}{\partial \theta}\right)$ has rank $d$.
  \item $\partial^2 G(y, \theta) / \partial \theta \theta^T$ is continuous and $\left\Vert\partial^2 G(y, \theta) / \partial \theta \theta^T \right\Vert$ is bounded by an integrable function $M_2(y)$ in a neighborhood of the true parameter $\theta_0$.
  \end{enumerate}
  Here, $\left\Vert \cdot \right\Vert$ denotes the Euclidean norm.

  Noting that $G_{uv}(Y_i,B_0)=\epsilon_v\epsilon_u-\omega_{uv}$,
  condition (1) is an immediate consequence of assumption (b) in our
  proposition.  Condition (3) is implied by assumption (a).  With
  polynomial estimating equations, all derivatives in conditions (2)
  and (4) exist.  Now, $G$ and its first and second partial derivatives are at most quadratic
  functions of $Y_i$, which in turn is a linear function of a
  realization of the error vector $\epsilon$.  Local bounds on the
  concerned quantities are easily obtained and assumption (b) ensures
  their integrability.
\end{proof}

\section{Acknowledgments}

This material is based upon work supported by the U.S.~National Science
Foundation under Grants No.~DMS 1561814 and 1712535. We would also like to thank the Institute for Mathematical Sciences at the National University of Singapore for travel support to the 2016 workshop: Empirical Likelihood Based Methods in Statistics.

\bibliographystyle{abbrvnat}
\bibliography{empLikSEM}

\end{document}